\theoremstyle{definition}
\newtheoremstyle{mytheorem}
  {3pt}
  {3pt}
  {\itshape}
  {}
  {\itshape\bfseries}
  {.}
  {.5em}
  {\thmname{#1}\thmnumber{ #2} \thmnote{ {\the\thm@notefont(#3)}}}
\theoremstyle{mytheorem}
\newtheorem{theorem}{Theorem}
\newtheorem{lemma}[theorem]{Lemma}
\theoremstyle{remark}
\newtheorem*{remark}{Remark}
\renewcommand*\env@matrix[1][*\c@MaxMatrixCols c]{%
  \hskip -\arraycolsep
  \let\@ifnextchar\new@ifnextchar
  \array{#1}}
\newcommand\Tstrut{\rule{0pt}{2.6ex}}         
\newcommand\Bstrut{\rule[-0.9ex]{0pt}{0pt}}   
\newcommand\blfootnote[1]{%
  \begingroup
  \renewcommand\thefootnote{}\footnote{#1}%
  \addtocounter{footnote}{-1}%
  \endgroup
}
\begin{document}
\title{Linear Block Coding for Efficient Beam Discovery
in Millimeter Wave Communication Networks}

\author{\IEEEauthorblockN{Yahia Shabara, C. Emre Koksal and Eylem Ekici}
\IEEEauthorblockA{Department of Electrical and Computer Engineering\\
The Ohio State University, Columbus, Ohio 43210\\
Email: \{shabara.1, koksal.2, ekici.2\}@osu.edu}
}
\maketitle



\IEEEpeerreviewmaketitle

\begin{abstract}
The surge in mobile broadband data demands is expected to surpass the available spectrum capacity below $6$ GHz. This expectation has prompted the exploration of millimeter wave (mm-wave) frequency bands as a candidate technology for next generation wireless networks. However, numerous challenges to deploying mm-wave communication systems, including channel estimation, need to be met before practical deployments are possible.
This work addresses the mm-wave channel estimation problem and treats it as a beam discovery problem in which locating beams with strong path reflectors is analogous to locating errors in linear block codes.
We show that a significantly small number of measurements (compared to the original dimensions of the channel matrix) is sufficient to reliably estimate the channel. We also show that this can be achieved using a simple and energy-efficient transceiver architecture.
\blfootnote{Eylem Ekici is supported in part by NSF grants CNS-1421576 and CNS-1731698.
C. Emre Koksal is supported in part by NSF grants CNS-1618566, CNS-1514260.}
\end{abstract}

\section{Introduction}
\label{Intro}
We investigate the problem of channel estimation in millimeter wave (mm-wave) wireless communication networks.
Mm-wave refers to the wavelength of electromagnetic signals at 30-300 GHz frequency bands.
At these high frequencies, channel measurement campaigns revealed that wireless communication channels exhibit very limited number of scattering clusters in the angular domain \cite{akdeniz2014millimeter, rangan2014millimeter, anderson2004building}. A \textit{cluster} refers to a propagation path or continuum of paths that span a small interval of transmit Angles of Departure (AoD) and receive Angles of Arrival (AoA).
Moreover, signal attenuation is very significant at mm-wave frequencies.
This motivates the use of large antenna arrays at the transmitter (TX) and receiver (RX) to provide high antenna gains that compensate for high path losses \cite{rappaport2013millimeter}.
Nevertheless, due to the high power consumption of mixed signal components, e.g., Analog to Digital Converters (ADCs) \cite{DSPforMmWave2016}, conventional digital transceiver architectures that employ a complete RF chain per antenna is not practical.
%
Hence, alternate architectures have been proposed for mm-wave radios with the objective of maintaining a close performance to channel capacity. Among the proposed solutions are the use of i) hybrid analog/digital beamforming \cite{alkhateeb2014channel,mendez2015channel,han2015large} and ii) fully digital beamforming with low resolution ADCs \cite{ImpactOfResOnPerf2015,highSNRcapacitySingleBitADC2014,AdaptiveOneBit_Rusu2015}.

For all proposed solutions, \textit{channel estimation} remains one of the most critical determinants of performance in communication.
Due to the large number of antennas at TX and RX,
estimation of the full channel gain matrix
may require a large number of measurements, proportional to the product of the number of transmit and receive antennas. This imposes a great burden on the estimation process. To address this issue, various methods have been used, the most prevalent among them, is compressed sensing \cite{Bajwa2010_compressedChannelSensing, ChEstSunAdaptiveCS2017, alkhateeb2014channel,ChEstScniter14,AdaptiveOneBit_Rusu2015}, which leverages channel sparsity.
Performance of compressed sensing based approaches is heavily dependent on the design of system (sensing) matrices. For instance, while random sensing matrices are known to perform well, in practice, sensing matrices involve the design of transmit and receive beamforming vectors and the choice of dictionary matrices\footnote{A dictionary matrix is used to express the channel in a sparse form.}.
Hence, purely random matrices have not been used in practice \cite{CS4Wireless_TipsAndTricks}. On the other hand, no design that involves deterministic sensing matrices has been considered for sparse channel estimation.

Despite the efforts, we do not have a full understanding of the dependence of channel estimation performance on the channel parameters and number of measurements.
In an effort to understand this relationship, the study in \cite{Alkhateeb_2015_HowManyMeasurements} proposed a multi-user mm-wave downlink framework based on compressed sensing in which the authors evaluate the achievable rate performance against the number of measurements.


In this work, we follow a different approach. We propose a systematic method in which we use sequences of error correction codes chosen in a way to control the channel estimation performance.
To demonstrate our approach, consider the following simple example.
Let a point to point communication channel be such that, there exists 3 possible receive AoA directions, only one of which may have a strong path to TX. We need to obtain the correct AoA at RX, if it exists.
Instead of exhaustively searching all 3 possible AoA directions, we alternatively measure signals from combined directions. For instance, by combining directions 1\&2 in one measurement and 2\&3 in the next measurement, we can find the AoA in just two measurements.
Specifically, four different scenarios might occur, namely, i) only the $1^{\text{st}}$, or ii) only the $2^{\text{nd}}$ measurement contains a strong path, iii) both $1^{\text{st}}$ and $2^{\text{nd}}$ measurements contain a strong path, and finally, iv) neither measurement reveals a strong path.
Interpretation of those cases is: AoA is in  i) direction 1, ii) direction 3, iii) direction 2, and iv) none exists.
Therefore, only 2 measurements are sufficient for beam detection instead of 3 that are needed for exhaustive search.


We will generalize this idea to develop a systematic method for beam detection, inspired by linear block coding.
Specifically, we show that linear block error correcting codes (LBC) possess favorable properties that fit in with the desirable behavior of sparse channel estimation. As a result, we are able to \textbf{i) provide rigorous criteria for solving the channel estimation problem}, \textbf{ii) significantly decrease the number of required measurements}, and \textbf{iii) utilize a fairly simple and energy-efficient transceiver architecture.}
We design the system using LBCs that leverage the fact that transmission errors are typically sparse in transmitted data streams, and hence, only a few number of erroneous bits need to be corrected per transmitted codeword.
Similarly, mm-wave channels are also sparse, i.e., only a small number of AoAs/AoDs carry strong signals.
LBCs can correct sparse transmission errors by identifying their location in a transmitted sequence (followed by flipping them).
We are inspired by LBC's ability to locate erroneous bits and exploit it to identify the AoAs/AoDs that carry strong signals (and their path gains) among all possible AoA/AoD values.
To this end, we exploit hard decision decoding of LBCs, in which the receiver obtains an \textit{error syndrome} that maps to one of the correctable error patterns.
An obtained error pattern determines the positions where errors have occurred.
Likewise, for channel estimation, the receiver will be designed to do a sequence of measurements that would result in a \textit{channel syndrome}. The resultant channel syndrome shall identify the positions (and values) of non-zero angular channel components.

Contributions of this work can be summarized as follows:
\begin{itemize}
\item	We set an analogy between beam discovery and channel coding to utilize low-complexity decoding techniques for efficient beam discovery.
\item	We provide rigorous criteria for setting the number of channel measurements based on the size of the channel and its sparsity level.
\item   We show that the number of measurements required for beam discovery is linked to the rate of a used linear block code. Hence, maximizing the rate of the underlying code is equivalent to minimizing the number of measurements.
\item	We develop a simple receiver architecture that enables us to measure signals arriving from multiple directions.
\end{itemize}

\textbf{Related Work:}
The main objective of mm-wave channel estimation is to find a mechanism that can reliably estimate the channel using as few measurements as possible.
For instance, in \cite{alkhateeb2014channel}, a compressed sensing based algorithm to estimate single-path channels is proposed and an upper bound on its estimation error is derived. Further, the authors propose a multipath channel estimation algorithm based on that of single-path channels. The proposed algorithms in \cite{alkhateeb2014channel} use an adaptive approach with a hierarchical codebook\footnote{A codebook refers to the set of all possible beamforming vectors.}
of increasing resolution. Similarly, the work in \cite{ChEstSunAdaptiveCS2017} proposes an adaptive compressive sensing channel estimation algorithm that accounts for off-the-grid AoAs and AoDs by using continuous basis pursuit \cite{CBP_2011} dictionaries.
Such adaptive algorithms divide the estimation process into stages and demand frequent feed back to the TX after each stage.
Hence, while the number of required measurements are shown to decrease, these methods may add a considerable overhead.

Other works like \cite{AgileMmWave_arxiv17,AgileMmWave_HotNets16} and \cite{RACE_2017} have proposed channel estimation algorithms using overlapped beam patterns.
For instance, the algorithm in \cite{RACE_2017} can estimate multipath channel components by sequentially estimating each path gain using an algorithm designed to estimate single-path channels followed by recursively removing the estimated paths' effect from subsequent measurements. Similar to \cite{alkhateeb2014channel,ChEstSunAdaptiveCS2017} adaptive beams with increasing resolution that require feedback to TX are used to refine the AoA/AoD estimates.

On the other hand, in \cite{AgileMmWave_arxiv17,AgileMmWave_HotNets16} beam alignment algorithms are proposed assuming a multipath mm-wave channel. The presented algorithms, with a high probability, can find the best beam in a logarithmic number of measurements (with respect to the total number of available AoA directions). Nonetheless, despite the possible existence of multiple paths, those algorithms are designed to find one path to TX.

Most research efforts in the field of mm-wave channel estimation use the magnitude and phase information of the acquired channel measurements. Nevertheless, if a carrier frequency offset (CFO) error occurs in the transceiver hardware, the phase information might be unreliable. Hence, the work in \cite{AgileMmWave_arxiv17,AgileMmWave_HotNets16,NonCoherentPathTrack_17} tackle this problem by ignoring the obtained phase information.
Similar to \cite{AgileMmWave_arxiv17,AgileMmWave_HotNets16}, the solution in \cite{NonCoherentPathTrack_17} can only obtain one (dominant) path between TX and RX, but it uses a compressed sensing based technique.
The CFO problem is tackled in \cite{myers2017compressive} by considering it as a variable to be estimated. Specifically, a third-order tensor is formulated in which two dimensions account for AoA/AoD and the third one is for CFO. The formulated tensor is shown to be sparse and is estimated using compressed sensing techniques.

While the power consumption problem is commonly alleviated using analog or hybrid beamforming transceivers, the work in \cite{ImpactOfResOnPerf2015,highSNRcapacitySingleBitADC2014,AdaptiveOneBit_Rusu2015,alkhateeb2014mimo}
consider using low resolution (single-bit) ADCs in a fully digital architecture. The work in \cite{ChEstScniter14,barati2015directional} study the channel estimation problem using such architectures. 


\subsubsection*{Notations} A vector and a matrix are denoted by $\boldsymbol{x}$ and $\boldsymbol{X}$, respectively, while $x$ denotes a scalar or a complex number depending on the context. The transpose, conjugate transpose and frobenius norm of $\boldsymbol{X}$ are given by $\boldsymbol{X}^T$, $\boldsymbol{X}^H$ and $\left\Vert \boldsymbol{X} \right\Vert_F$, respectively.
The sets of real and complex numbers are $\mathbb{R}$ and $\mathbb{C}$.
The $k {\times} k$ identity matrix is $\boldsymbol{I_k}$.
A set is denoted by $\mathcal{X}$, while $|\mathcal{X}|$ is its cardinality. Finally, $\mathbbm{1}()$ is the indicator function.

\section{Motivating Example}
\label{illustrative}
To elaborate, we present the following example:
consider a point to point communication link between a TX with single antenna ($n_t = 1$) and RX with $n_r = 15$ antennas. Therefore, the vector of channel gains\footnote{Let all the channels have one single significant tap.},
$\boldsymbol{q}$, is a $15 {\times} 1$ vector,
and its corresponding angular (virtual) channel, $\boldsymbol{q}^a$, is a vector of the same size and can be derived using the DFT matrix $\boldsymbol{U_r}$ as $\boldsymbol{q}^a =\boldsymbol{U}^H_{\boldsymbol{r}} \: \boldsymbol{q}$ \cite{sayeed2002deconstructing} (this is merely a linear transformation that maps the sequence of channel gains into a sequence of gains from different AoAs. This mapping will be presented in more detail in Section \ref{SystemModel}).
Assume a single-path channel, i.e., the channel has only one cluster with a single path in it.
Let the path gain be denoted by $\alpha$. For simplicity assume $\alpha = 1$.
Further, let us assume perfect sparsity such that the AoA is along one of the directions defined in the DFT matrix $\boldsymbol{U_r}$, i.e., the channel path will only contribute to one angular bin.
Finally, let us also neglect the channel noise.

\begin{figure}[t]
\centering
\includegraphics[width=0.5\linewidth]{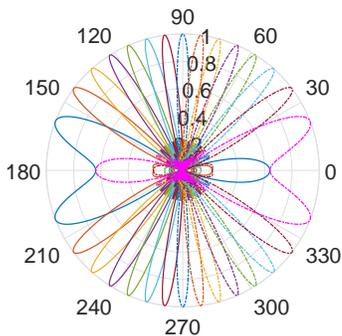}
\caption{Beam patterns of all possible angular directions}
\label{fig:AllBeams} 
\end{figure}

Based on the channel description above, we get an angular channel vector of the form
\begin{equation}
\boldsymbol{q}^a = 
 \begin{pmatrix}
  q^a_0 & q^a_1 &  \hdots &  q^a_{14}
 \end{pmatrix}^T,
\end{equation}
such that $q^a_i \in \{0,1\}$ and the number of non-zero elements in $\boldsymbol{q}^a$ is $1$. Any component of $\boldsymbol{q}^a$ can be measured using one of the beam patterns shown in Fig. \ref{fig:AllBeams}.

\textbf{Objective:}
Suppose the transmitter sends pilot symbols of the form $x {=} 1$. Thus, the received vector $\boldsymbol{y}$ of size $15 {\times} 1$ can be obtained as
\begin{equation}
\boldsymbol{y}   =  \boldsymbol{q} x   =  \boldsymbol{q} \Longleftrightarrow \boldsymbol{y}^a    =  \boldsymbol{q}^a
\end{equation}
where $\boldsymbol{y}^a$ is the received vector in the angular domain. So, with change of basis, we can think of $\boldsymbol{q}^a$ as a received sequence with just one non-zero component. To identify the position of this non-zero component, the receiver performs a sequence of channel measurements. Let $y_{s_i}$ denote the $i^{th}$ measurement such that
\begin{equation}
y_{s_i} = \boldsymbol{w}_i^H \boldsymbol{y} = \boldsymbol{w}_i^H \boldsymbol{q},
\end{equation}
where $\boldsymbol{w}_i$ denotes the $i^{th}$ receive (rx-)combining vector.

Our aim is to design channel measurements (i.e., $\boldsymbol{w_i}$'s) such that the correct AoA is identified using the minimum number of measurements.

\textbf{Proposed Solution:}
We consider this non-zero component to be an anomaly to a normally all-zero $15$-bin angular channel.
Hence, the goal of identifying its position is analogous to finding the most likely $1$-bit error pattern of a $15$-bit codeword in a linear block code.
Now, we need to identify an error correction code with codewords of length $15$ and with $1$-bit error correction capability \cite{van2012introduction}. Hence, we can use the binary $(15,11,3)$ Hamming code with parity check matrix $\boldsymbol{H}$ of size $4 {\times} 15$ and given by
\begin{equation}
  \boldsymbol{H}{=}
  \begin{pmatrix}
  1 & 0 & 0 & 0 & 1 & 0 & 0 & 1 & 1 & 0 & 1 & 0 & 1 & 1 & 1 \\
  0 & 1 & 0 & 0 & 1 & 1 & 0 & 1 & 0 & 1 & 1 & 1 & 1 & 0 & 0 \\
  0 & 0 & 1 & 0 & 0 & 1 & 1 & 0 & 1 & 0 & 1 & 1 & 1 & 1 & 0 \\
  0 & 0 & 0 & 1 & 0 & 0 & 1 & 1 & 0 & 1 & 0 & 1 & 1 & 1 & 1 \\
  \end{pmatrix}
\end{equation}
where $h_{i,j}$ represents the component at the intersection of row $i$ and column $j$ of $\boldsymbol{H}$. Using hard decision decoding of LBCs, error syndrome vectors of length $4$ are obtained. Every possible syndrome vector maps to only one correctable error pattern\footnote{A correctable error pattern of a ($15,11,3$) Hamming code is any $15{\times}1$ binary vector that contains only one '1' (at the error's position).}.
Similarly, for channel estimation, several measurements should be performed at RX where each measurement mimics the behavior of a corresponding element in the error syndrome vector. Each measurement boils down to adding signals from a subset of the available $15$ directions. Since each measurement can either include the direction of the incoming strong path of gain $\alpha=1$ or no strong paths at all, then the elements of the channel syndrome vector are in $\{0,1\}$.

\begin{figure*}[t]
\centering
\includegraphics[width=1\linewidth]{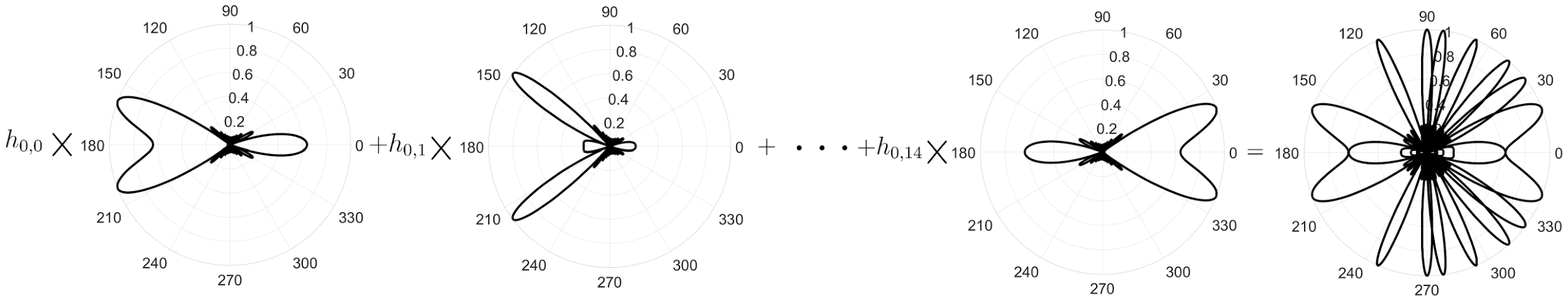}
\caption{Beam pattern of receive combining vector $\boldsymbol{w_0}$}
\label{fig:RxComb} 
\end{figure*}

For every measurement $y_{s_i}$, we design $\boldsymbol{w}_i$ based on the entries of the $i^{th}$ row of $\boldsymbol{H}$ such that: if $h_{i,j} = 1$, then we include the beam pattern that points to direction $j$ in $\boldsymbol{w}_i$.
For example, the $0^{th}$ row of $\boldsymbol{H}$ is given by $[1 \: 0 \: 0 \: 0 \: 1 \: 0 \: 0 \: 1 \: 1 \: 0 \: 1 \: 0 \: 1 \: 1 \: 1]$. Hence, $\boldsymbol{w}_0$ should include beam patterns pointing to the set of directions $\{0,4,7,8,10,12,13,14\}$.

Fig. \ref{fig:RxComb} illustrates this operation for $\boldsymbol{w}_0$. We can see that the resultant beam pattern of $\boldsymbol{w}_i$ combines signals coming from a set of selected directions dictated by the $i^{th}$ row of $\boldsymbol{H}$.
We call the obtained measurement vector, $\boldsymbol{y_s}$, the \textit{channel syndrome} which is analogous to error syndromes in hard decision decoding of LBCs.
Then, a table that maps every possible channel syndrome to a unique corresponding channel can be constructed.
Table \ref{table:SyndromeMap} shows this mapping.

\begin{table}[t]
\caption{\small Mapping of channel syndromes to angular channels}
\label{table:SyndromeMap}
\centering
 \begin{tabular}{||c | c||} 
 \hline
 Channel Syndrome $\boldsymbol{y}^T_{\boldsymbol{s}}$ & Angular Channel ${\boldsymbol{q}^a}^T$ \Tstrut \Bstrut \\ 
 \hline\hline
$[0 \: 0 \: 0 \: 0]$ & $[0 \: 0 \: 0 \: 0 \: 0 \: 0 \: 0 \: 0 \: 0 \: 0 \: 0 \: 0 \: 0 \: 0 \: 0]$\\ 
 \hline
$[1 \: 0 \: 0 \: 0]$ & $[1 \: 0 \: 0 \: 0 \: 0 \: 0 \: 0 \: 0 \: 0 \: 0 \: 0 \: 0 \: 0 \: 0 \: 0]$\\
 \hline
$[0 \: 1 \: 0 \: 0]$ & $[0 \: 1 \: 0 \: 0 \: 0 \: 0 \: 0 \: 0 \: 0 \: 0 \: 0 \: 0 \: 0 \: 0 \: 0]$\\
 \hline
$[0 \: 0 \: 1 \: 0]$ & $[0 \: 0 \: 1 \: 0 \: 0 \: 0 \: 0 \: 0 \: 0 \: 0 \: 0 \: 0 \: 0 \: 0 \: 0]$\\
 \hline
$[0 \: 0 \: 0 \: 1]$ & $[0 \: 0 \: 0 \: 1 \: 0 \: 0 \: 0 \: 0 \: 0 \: 0 \: 0 \: 0 \: 0 \: 0 \: 0]$\\
 \hline
$[1 \: 1 \: 0 \: 0]$ & $[0 \: 0 \: 0 \: 0 \: 1 \: 0 \: 0 \: 0 \: 0 \: 0 \: 0 \: 0 \: 0 \: 0 \: 0]$\\
 \hline
$[0 \: 1 \: 1 \: 0]$ & $[0 \: 0 \: 0 \: 0 \: 0 \: 1 \: 0 \: 0 \: 0 \: 0 \: 0 \: 0 \: 0 \: 0 \: 0]$\\
 \hline
$[0 \: 0 \: 1 \: 1]$ & $[0 \: 0 \: 0 \: 0 \: 0 \: 0 \: 1 \: 0 \: 0 \: 0 \: 0 \: 0 \: 0 \: 0 \: 0]$\\
 \hline
$[1 \: 1 \: 0 \: 1]$ & $[0 \: 0 \: 0 \: 0 \: 0 \: 0 \: 0 \: 1 \: 0 \: 0 \: 0 \: 0 \: 0 \: 0 \: 0]$\\
 \hline
$[1 \: 0 \: 1 \: 0]$ & $[0 \: 0 \: 0 \: 0 \: 0 \: 0 \: 0 \: 0 \: 1 \: 0 \: 0 \: 0 \: 0 \: 0 \: 0]$\\
 \hline
$[0 \: 1 \: 0 \: 1]$ & $[0 \: 0 \: 0 \: 0 \: 0 \: 0 \: 0 \: 0 \: 0 \: 1 \: 0 \: 0 \: 0 \: 0 \: 0]$\\
 \hline
$[1 \: 1 \: 1 \: 0]$ & $[0 \: 0 \: 0 \: 0 \: 0 \: 0 \: 0 \: 0 \: 0 \: 0 \: 1 \: 0 \: 0 \: 0 \: 0]$\\
 \hline
$[0 \: 1 \: 1 \: 1]$ & $[0 \: 0 \: 0 \: 0 \: 0 \: 0 \: 0 \: 0 \: 0 \: 0 \: 0 \: 1 \: 0 \: 0 \: 0]$\\
 \hline
$[1 \: 1 \: 1 \: 1]$ & $[0 \: 0 \: 0 \: 0 \: 0 \: 0 \: 0 \: 0 \: 0 \: 0 \: 0 \: 0 \: 1 \: 0 \: 0]$\\
 \hline
$[1 \: 0 \: 1 \: 1]$ & $[0 \: 0 \: 0 \: 0 \: 0 \: 0 \: 0 \: 0 \: 0 \: 0 \: 0 \: 0 \: 0 \: 1 \: 0]$\\
 \hline
$[1 \: 0 \: 0 \: 1]$ & $[0 \: 0 \: 0 \: 0 \: 0 \: 0 \: 0 \: 0 \: 0 \: 0 \: 0 \: 0 \: 0 \: 0 \: 1]$\\ 
\hline
\end{tabular}
\end{table}

In this example, we are able to estimate the channel based on only $4$ measurements as opposed to $15$, which is the number of measurements with exhaustive search.
Important aspects of our proposed method include the choice of codes, the design of precoding and rx-combining measurement vectors, the effect of variable gains and phases of different paths and the occurrence of measurement errors.

\begin{remark}[Receiver Architecture] Note that, to achieve beam patterns similar to the one shown in Fig. \ref{fig:RxComb}, the receiver architecture needs to be a bit different from those of classical analog/hybrid beamforming architectures. Specifically, in addition to low-noise amplifiers (LNA) typically placed at the output of each antenna, we will need to add controllable low-power amplifiers, as well. The resultant architecture is still quite simple (see Fig. \ref{fig:architecture}). That is, besides the low-power amplifiers, the proposed architecture is similar to those of simple analog beamforming. Moreover, low-resolution ADCs can be used which mitigates the high power consumption problem associated with high-resolution ADCs.
\end{remark}

\textbf{Motivation for LBC-inspired approach:}
LBCs are designed to discover and correct a certain maximum number of errors in a codeword of a specified length. This objective is achieved by adding redundant \textit{parity check} bits to the original information sequence.
What makes our devised approach attractive is that the number of measurements needed for channel estimation can be shown to be equal to the number of parity bits of some corresponding code. Hence, we can control the estimation performance via appropriate code selection.
In this work, we will propose a method to specify the number of necessary channel measurements as a function of the rate of the underlying code.


\section{System Model}
\label{SystemModel}

\begin{figure}[t]
\centering
\includegraphics[width=1\linewidth]{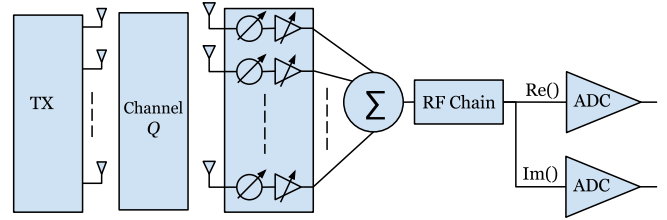}
\caption{Hardware Block Diagram: Every antenna is connected to a phase shifter and low-power variable gain amplifier. Then, all outputs are combined using an adder and passed to an RF chain with in-phase and quadrature channels.}
\label{fig:architecture} 
\end{figure}

Consider a point-to-point millimeter-wave wireless communication system with a transmitter (TX) equipped with $n_t$ antennas and a receiver (RX) with $n_r$ antennas placed at fixed locations. Uniform Linear Arrays (ULA) are assumed at both TX and RX where each antenna element is connected to a phase shifter and a variable gain amplifier. A single RF chain at the receiver, with in-phase (I) and quadrature (Q) channels, is fed through a linear combiner (see Fig. \ref{fig:architecture}). Only two mid-tread ADCs, with $2^b{+}1$ quantization levels, are utilized, i.e., quantization levels are $\{-2^{b-1}, \dots,-1,0,1,\dots,2^{b-1}\}$.

We adopt a single-tap channel model where $\boldsymbol{Q} \in \mathbb{C}^{ n_r { \times } n_t}$ denotes the channel matrix between TX and RX.
Assume that the channel has $L$ clusters, where each cluster contains a single path with attenuation $\alpha_l$, AoD $\theta_l$, and AoA $\phi_l$.
The channel is assumed to be sparse such that $L \ll n_t, n_r$.
Let $\alpha_l^b \in \mathbb{C}$ denote the baseband channel gain and is defined as
\begin{equation}
\alpha_l^b {=} \alpha_l \: \sqrt{n_t n_r} \: e^{- j\frac{2 \pi \rho_l}{\lambda_c}}
\end{equation}
where $\rho_l$ is the length of path $l$ and $\lambda_c$ is the carrier frequency.
The angular cosines of AoD and AoA associated with path $l$ are denoted by $\Omega_{tl}$ and $\Omega_{rl}$, respectively. The transmit and receive spatial signatures along the direction $\Omega$ is given by $\boldsymbol{e_t}(\Omega)$ and $\boldsymbol{e_r}(\Omega)$ such that
\begin{equation}
\boldsymbol{e_t}(\Omega) {=} \frac{1}{\sqrt{n_t}}
  \begin{pmatrix}
  1    							    	  \\
  e^{-j2\pi   \Delta_t \Omega}  	  \\
  e^{-j2\pi 2 \Delta_t \Omega}  	  \\
  \vdots								  \\
  e^{-j2\pi (n_t-1) \Delta_t \Omega}  \\
  \end{pmatrix},
\end{equation}
where  $\boldsymbol{e_r}(\Omega)$ has a similar definition to $\boldsymbol{e_t}(\Omega)$, and
$\Delta_t$ and $\Delta_r$ are the antenna separations at TX and RX normalized by the wavelength $\lambda_c$. Thus, 
$\boldsymbol{Q}$ can be written as
\begin{equation}
\boldsymbol{Q} = \sum_{l=1}^{L} \alpha_l^b \: \boldsymbol{e_r}(\Omega_{rl}) \: \boldsymbol{e}^H_{\boldsymbol{t}}(\Omega_{tl}).
\end{equation}
We define $\boldsymbol{U_t}$ and $\boldsymbol{U_r}$ as the unitary Discrete Fourier Transform (DFT) matrices whose columns constitute an orthonormal basis for the transmit and receive signal spaces $\mathbb{C}^{n_t}$ and $\mathbb{C}^{n_r}$, respectively. $\boldsymbol{U_t}$ (and similarly $\boldsymbol{U_r}$) is given by
\begin{equation}
\boldsymbol{U_t} = 
  \begin{pmatrix}
  \boldsymbol{e_t}(0) & \boldsymbol{e_t}(\frac{1}{L_t}) & \dots & \boldsymbol{e_t}(\frac{n_t-1}{L_t})
  \end{pmatrix},
\end{equation}
where $L_t$ and $L_r$ are the normalized lengths of the transmit and receive antenna arrays such that $L_t = n_t \Delta_t$ and $L_r = n_r \Delta_r$.
Let $\boldsymbol{Q^a}$ be the channel matrix in the angular domain \cite{sayeed2002deconstructing}, where
\begin{equation} \label{AngularChannel}
\boldsymbol{Q^a} = \boldsymbol{U}^H_{\boldsymbol{r}} \boldsymbol{Q} \boldsymbol{U_t}.
\end{equation}
The rows and columns of $\boldsymbol{Q}^a$ divide the channel into resolvable RX and TX bins, respectively.
Further, we assume a perfect sparsity model in which AoDs $\theta_l$, and AoA $\phi_l$, are along the directions defined in $\boldsymbol{U_t}$ and $\boldsymbol{U_r}$ \cite{ChEstScniter14,alkhateeb2014channel,RACE_2017}. Hence, each channel cluster will only contribute to a single pair of TX and RX bins. Therefore, $\boldsymbol{Q}^a$ has a maximum of L non-zero TX and RX bins.

The baseband channel model is given by
\begin{equation}
\boldsymbol{y} = \boldsymbol{Q} \boldsymbol{x} + \boldsymbol{n},
\end{equation}
where $\boldsymbol{x} \in \mathbb{C}^{n_t}$ is the transmitted signal, $\boldsymbol{y} \in \mathbb{C}^{n_r}$ is the received signal and $\boldsymbol{n} \sim \mathcal{CN}(\boldsymbol{0} , N_0 \boldsymbol{I}_{n_r} )$
is an i.i.d. complex Gaussian noise vector.

Let $\boldsymbol{f} \in \mathbb{C}^{n_t} $ and $\boldsymbol{w} \in \mathbb{C}^{n_r}$ be the precoding and rx-combining vectors, respectively. The transmit signal $\boldsymbol{x}$ is given by $\boldsymbol{x} = \boldsymbol{f} s$ where $s$ is the transmitted symbol with average power $\mathbb{E}(ss^H) {=} P$. After the receiver applies the rx-combining vector $\boldsymbol{w}$, the resultant symbol $y_s$ can be written as
\begin{equation}
y_s = \boldsymbol{w}^H \boldsymbol{Q} \boldsymbol{f} s + \boldsymbol{w}^H \boldsymbol{n}
\end{equation}
where $y_s$ constitutes a single unit measurement obtained using specific $\boldsymbol{f}$ and $\boldsymbol{w}$ vectors.
We assume that $\boldsymbol{Q}$ remains fixed throughput the entire estimation process. The noise component $\boldsymbol{w}^H \boldsymbol{n}$ normalized by $\left\Vert \boldsymbol{w} \right\Vert$ is also a complex gaussian random variable such that $\frac{\boldsymbol{w}^H}{\left\Vert \boldsymbol{w} \right\Vert} \boldsymbol{n} \sim \mathcal{CN}(0,N_0)$.
We define the signal to noise ratio (\textit{SNR}) on a per path basis such that \textit{SNR} of path $l$ is given by
\begin{equation}
\label{eqn:SNRl}
\textit{SNR}_l = \frac{P}{N_0} \lvert \alpha_l^b \rvert^2.
\end{equation}
Note that the actual received \textit{SNR} depends on all path gains included in a measurement.
Finally, we define a \textit{detectable path} as a path $l$ for which $\lvert \alpha_l^b \rvert$ is such that the received signal power on that path is above a minimum detectable signal power (MDS).

\section{Problem Statement}
\label{ProblemStatement}

Suppose a maximum number of $L$ clusters need to be discovered in the channel where $L \ll n_t,n_r$. Under the prefect sparsity assumption, $\boldsymbol{Q}^a$ has a maximum of $L$ non-zero RX and TX angular bins.
Our objective is to identify the angular positions at which channel clusters exist and identify their path gain values using the \textbf{least possible number of measurements}. Let the number of measurements be $m$ such that each measurement, $y_{s_{i,j}}$, is obtained using the precoder $\boldsymbol{f_j}$ and rx-combiner $\boldsymbol{w_i}$. Measurements take the form $y_{s_{i,j}} = \boldsymbol{w_i}^H \boldsymbol{Q} \boldsymbol{f_j} s + \boldsymbol{w_i}^H \boldsymbol{n}$. 
Let $\xi()$ be a mapping function that takes in the measurements $\{y_{s_{i,j}}\}_{\forall i,j}$ as inputs and returns the estimated channel $\widehat{\boldsymbol{Q}}^{\boldsymbol{a}}$.
We stack the measurements $\{y_{s_{i,j}}\}_{\forall j}$ in a single syndrome vector such that $\boldsymbol{y_{s_j}} = [y_{s_{0,j}} \: y_{s_{1,j}} \dots y_{s_{{m_1-1},j}}]^T$.
Our design variables are the precoding vectors $\boldsymbol{f_j}$, rx-combining vectors $\boldsymbol{w_i}$, the number of measurements $m$, the mapping function $\xi()$, and the transmitted symbol power $P$.

In its essence, solving this problem boils down to finding the optimal set of measurements $\{y_{s_{i,j}}\}_{\forall i,j}$ and the mapping function $\xi()$ such that $\boldsymbol{Q}^a$ can be estimated using the minimum number of measurements.
For ease of explanation, we first consider a channel with a single transmit antenna and $n_r$ receive antennas. Therefore, no precoding is needed and the design of measurements is reduced to designing the rx-combining vectors $\boldsymbol{w_i}$.
Recall that in the motivating example in Section \ref{illustrative}, we dealt with a special case of $n_r {\times} 1$ channels where we sought to find the direction of arrival of a channel with a single path of known gain, $\alpha = 1$. In the general case, we should consider arbitrary path gains $\alpha \in \mathbb{C}$ and channels with multiple paths.

\section{Beam Discovery}
\label{BeamDiscovery}
In this section, we present our proposed solution. As an initial step, we solve a simplified version of the problem where communication channels have a single transmit antenna and multiple receive antennas. Afterwards, we will build on it to provide the solution for general channels with multiple transmit and receive antennas.

\subsection{Beam Detection using LBC-inspired approach}
To identify the exact number of measurements and their corresponding design, we follow a decoding-like approach of
LBC\footnote{\label{LBC}In channel coding, the convention is to use row vectors. Thus, let $\boldsymbol{x}$ and $\boldsymbol{c}$ be $1{\times}k$ and $1{\times}n$ binary row vectors that represent an information sequence and its corresponding codeword of an LBC, respectively. Also let $\boldsymbol{r} {=} \boldsymbol{c} {+} \boldsymbol{e}$ be a received sequence corrupted by $1{\times}n$ error pattern $\boldsymbol{e}$. To decode $\boldsymbol{r}$, we calculate an error syndrome  vector $\boldsymbol{s}$, of size $1{\times}n{-}k$, such that $\boldsymbol{s} {=} \boldsymbol{r} \boldsymbol{H}^T$, where $\boldsymbol{H}$ is the parity check matrix of the used LBC. Then, a most likely error pattern $\hat{\boldsymbol{e}}$ can be uniquely identified by $\boldsymbol{s}$ using a look-up table called the \textit{standard array}. Finally, the decoded codeword is obtained using $\hat{\boldsymbol{c}}{=}\boldsymbol{r}{-}\hat{\boldsymbol{e}}$. A decoding error occurs if the number of errors, identified using $1$'s in $\boldsymbol{e}$, is beyond the error correction capability of the used code, denoted by $e_n$.
Note that in this context, all vectors, matrices and math operations are over GF(2).}.
First, we need to find an LBC, $C$, that has an error correction capability $e_n$ such that i) the maximum number of clusters in the channel, $L$, is equal to $e_n$ and ii) the length of its codewords $n$ is equal to the number of antennas $n_r$ ($n_r $ is also the number of resolvable directions).
The code $C$ has a parity check matrix $\boldsymbol{H}$ which represents the link between channel decoding and beam detection problems.
Binary codes deal with data and error sequences defined over the finite field $GF(2)$, i.e., addition and multiplication operations are defined over $GF(2)$ with binary inputs and outputs, i.e., $1$'s and $0$'s.
However, mm-wave channel parameters are defined over the complex numbers field $\mathbb{C}$. Therefore, to account for arbitrary path gains, we should be able to extend this concept to $\mathbb{C}$.

\textbf{Although $\boldsymbol{H}$ is defined over $GF(2)$, we interpret its $'1'$ and $'0'$ entries as real numbers.}
Then, similar to channel decoding, we seek to obtain a channel syndrome, $\boldsymbol{y_s}$, such that
${(\boldsymbol{y}_s)}^T {=} {(\boldsymbol{q}^a)}^T \boldsymbol{H}^T \Longrightarrow \boldsymbol{y}_s {=} \boldsymbol{H} \boldsymbol{q}^a$.
This matrix multiplication can be realized using channel measurements such that each measurement gives one component in $\boldsymbol{y_s}$.
Measurements $\{ y_{s_i}\}_{\forall i}$ make up the components of the channel syndrome vector $\boldsymbol{y_{s}}$. Then, we need to find a mapping function $\xi()$ that takes in the channel syndrome vector $\{\boldsymbol{y_{s}}\}$ as an input and returns the estimated channel $\widehat{\boldsymbol{q}}^{\boldsymbol{a}}$.
The position of each non-zero component in $\widehat{\boldsymbol{q}}^{\boldsymbol{a}}$ identifies a path's AoA, and its value identifies its baseband path gain.
Finally, for this to work, we need to show that such channel measurements provide one-to-one mapping to the channel. In other words, $\boldsymbol{y_s}$ must be a \textit{sufficient statistic} for estimating the channel.
In Section \ref{SufficientStatistic}, we will show that our design results in the sufficient statistic we seek to achieve.

\begin{remark}[Number of Measurements]
The solution we obtain is dependent on channel parameters, namely, the number of antennas and the sparsity level of the channel.
That is, at a fixed sparsity level, i.e., fixed number of clusters $L$, a larger number of antennas necessitates more channel measurements. In other other words, the high resolution realized by large $n_r$ comes at a price of an increased number of measurements.
Similarly, at fixed $n_r$, more channel clusters involve more measurements for correct channel estimation.
\end{remark}

\subsection{Measurements Design}
\label{MeasurementsDesign}
Recall that each component in $\boldsymbol{q}^a$ represents a resolvable angular direction at the receiver. Let each resolvable direction be given an identification number ($\textit{dir}_{\textit{rx}}$\textit{\#i}).
Also let $\textit{beam}_{\textit{rx}}$\textit{\#i} denote the beam pattern pointing to $\textit{dir}_{\textit{rx}}$\textit{\#i}, i.e., a signal coming from $\textit{dir}_{\textit{rx}}$\textit{\#i} can be individually measured using $\textit{beam}_{\textit{rx}}$\textit{\#i} (similar to beam patterns in Fig. \ref{fig:AllBeams}).

Now, we seek to obtain $\boldsymbol{y_s} {=} \boldsymbol{H} \boldsymbol{q^a}$ using careful design of $\boldsymbol{w_i}$'s (note that we drop the noise terms for clarity), i.e.,
\begin{equation}
\boldsymbol{y_s} = 
 \begin{pmatrix}
  y_{s_0} \\
  y_{s_1} \\
  \vdots\\
  y_{s_{m{-}1}}
 \end{pmatrix}
 =
  \begin{pmatrix}
  \boldsymbol{w}^H_{\boldsymbol{0}} \boldsymbol{q} \\
  \boldsymbol{w}^H_{\boldsymbol{1}} \boldsymbol{q} \\
  \vdots\\
  \boldsymbol{w}^H_{\boldsymbol{m{-}1}} \boldsymbol{q} \\
  \end{pmatrix}
  \equiv
  \boldsymbol{H} \boldsymbol{q^a}.
\end{equation}
To achieve this, each rx-combining vector $\boldsymbol{w_i}$ is designed as a multi-armed beam, i.e., composed of several sub-beams similar to the beam pattern in Fig. \ref{fig:RxComb}. The sub-beams included in each $\boldsymbol{w_i}$ are identified by the $i^{th}$ row of the matrix $\boldsymbol{H}$.
That is, only if $h_{i,j}$, the intersection of the $i^{th}$ row and $j^{th}$ column, is $=1$, do we include $\textit{beam}_{\textit{rx}}$\textit{\#j} as a sub-beam in $\boldsymbol{w_i}$. (also refer to our discussion in Section \ref{illustrative}).


The design of rx-combining vectors is a crucial aspect of this work.
As an initial step towards obtaining proper rx-combining vectors, we consider designing $\boldsymbol{w_i}$'s using linear summation of all analog beamformers that correspond to \textit{beam}$_{\text{\textit{rx}}}$\textit{\#j}'s $\forall j : h_{i,j} {=} 1$.
Let $\Omega_j {=} \cos(\phi_j) {=} \frac{j}{L_r} \: \forall j {\in} \{0,\dots,n_r{-}1\}$, such that $\boldsymbol{e_r}(\Omega_j)$ is the spatial signature of $\textit{beam}_{\textit{rx}}$\textit{\#j}.
Then, $\boldsymbol{w_i}$ can be designed as
\begin{equation}
\boldsymbol{w_i} = \sum_{j=0}^{n_r-1} \mathbbm{1}(h_{i,j}  {=} 1) \boldsymbol{e_r}(\Omega_j)
\end{equation}

\subsection{Sufficient Statistic}
\label{SufficientStatistic}
We will show in this section that each channel syndrome can only be mapped to a single \textit{measurable channel}. A measurable channel in this context refers to $n_r{\times}1$ channels with $L$ non-zero components such that $L \leq e_n$, where $e_n$ is the error correction capability of the underlying code $C$ and $n_r {=} n$ is its CWs length.


Since each measurement combines signals coming from multiple directions, each element in the channel syndrome vector is a linear combination of a subset of the available paths. In other words, each measurement has the possibility that one or more paths are included in it.
This setting is rather challenging. To understand why, consider a channel that has two paths with gains $\alpha_1,\alpha_2 \in \mathbb{C}$. Suppose that
$\alpha_1$ and $\alpha_2$ are of equal magnitudes but are out-of-phase (i.e., phase shift $= 180^\circ$). Hence, if signals coming from both paths are combined in a single measurement, the resultant value is $0$ which is similar to the result we get if no paths exist in the measured directions.
Also each channel measurement can be a result of endless possibilities for the combined path gain values.
So, a natural question to ask is: does this ambiguity cause measurement errors?
The direct answer to this question is: \textbf{No}.
In the sequel we will show that the resulting channel syndrome, i.e., the combination of all channel measurements, is sufficient to correctly estimate the channel.

First, recall our discussion in Footnote \ref{LBC}.
Then, consider all \textit{single-bit error patterns} $\boldsymbol{e^{(i)}}$ 
of a code $C$, with maximum number of correctable errors ${=}e_n$, such that
\[
e_k^{(i)} = \begin{cases}
               1, & k = i\\
               0, & k \neq i
            \end{cases}
\]
where $e_k^{(i)}$ is the $k^{th}$ component of $\boldsymbol{e^{(i)}}$.
Also let $\boldsymbol{s^{(i)}}$ be the corresponding error syndrome of $\boldsymbol{e^{(i)}}$.
Recall that $\boldsymbol{s^{(i)}} {=} \boldsymbol{e^{(i)}}\boldsymbol{H}^T$.
Hence, we can see that $\boldsymbol{s^{(i)}}$ is exactly the $i^{th}$ row of $\boldsymbol{H}^T$, i.e., $i^{th}$ column of $\boldsymbol{H}$.
Now, write any correctable error pattern $\boldsymbol{e}$ as a linear combination of all single-bit error patterns over the finite field $GF(2)$ such that
\begin{equation}
\boldsymbol{e} = \omega_1 \boldsymbol{e}^{(1)} + \omega_2 \boldsymbol{e}^{(2)} + \dots + \omega_n \boldsymbol{e}^{(n)}
\end{equation}
and its corresponding error syndrome is
\begin{equation}
\boldsymbol{s} = \omega_1 \boldsymbol{s}^{(1)} + \omega_2 \boldsymbol{s}^{(2)} + \dots + \omega_n \boldsymbol{s}^{(n)}
\end{equation}
where $\omega_i \in \{0,1\}$ and
$|\omega_i : \omega_i =1 | \leq e_n$.
\begin{lemma}
For an error pattern $\boldsymbol{e}_t$ with number of bit errors identical to $e_n$, its syndrome $\boldsymbol{s}_t$ is a linear combination of $e_n$ linearly independent vectors $\boldsymbol{s}^{(i)}$.
\end{lemma}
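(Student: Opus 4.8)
The plan is to reduce the claim to a standard property of the parity-check matrix $\boldsymbol{H}$. First I would recall from the discussion preceding the lemma that $\boldsymbol{s}^{(i)} = \boldsymbol{e}^{(i)}\boldsymbol{H}^T$ is precisely the $i^{\text{th}}$ column of $\boldsymbol{H}$. Since $\boldsymbol{e}_t$ has exactly $e_n$ bit errors, its support is an index set $\mathcal{I}$ with $|\mathcal{I}| = e_n$, and by linearity over $GF(2)$ we have $\boldsymbol{s}_t = \sum_{i \in \mathcal{I}} \boldsymbol{s}^{(i)}$. Thus the lemma is equivalent to the statement that the $e_n$ columns of $\boldsymbol{H}$ indexed by $\mathcal{I}$ are linearly independent over $GF(2)$.

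I would then argue by contradiction. Suppose the vectors $\{\boldsymbol{s}^{(i)}\}_{i \in \mathcal{I}}$ were linearly dependent. Over $GF(2)$ this means there is a nonempty subset $\mathcal{J} \subseteq \mathcal{I}$ with $\sum_{j \in \mathcal{J}} \boldsymbol{s}^{(j)} = \boldsymbol{0}$. Letting $\boldsymbol{v}$ be the binary row vector whose support is exactly $\mathcal{J}$, this reads $\boldsymbol{v}\boldsymbol{H}^T = \boldsymbol{0}$, i.e. $\boldsymbol{v}$ is a nonzero codeword of $C$ of Hamming weight $|\mathcal{J}| \leq e_n$. The final step invokes the link between the error-correction capability and the minimum distance: a code that corrects $e_n$ errors has minimum distance $d \geq 2e_n + 1$, so every nonzero codeword has weight at least $2e_n + 1$. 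Since $|\mathcal{J}| \leq e_n < 2e_n + 1$, the existence of $\boldsymbol{v}$ contradicts this bound, so no such dependence exists and the $e_n$ syndromes are linearly independent.

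I expect the main obstacle to be expository rather than computational, namely making the bridge from the abstract phrase \emph{corrects $e_n$ errors} to the concrete column-independence statement. The cleanest route pairs the minimum-distance characterization $d \geq 2e_n + 1$ with the fact that a set of columns of $\boldsymbol{H}$ is dependent if and only if some nonzero codeword is supported on those positions. Care is needed to use the correct inequality: it is $2e_n + 1$, not $e_n + 1$, and this is exactly what produces the strict gap $|\mathcal{J}| \leq e_n < d$ that closes the argument. One should also note that independence here is meant over $GF(2)$, consistent with the syndrome arithmetic defined in Footnote~\ref{LBC}, so the conclusion feeds directly into the subsequent claim that distinct correctable error patterns (and hence measurable channels) yield distinct syndromes.
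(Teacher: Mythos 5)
Your proof is correct, and it takes a genuinely different route from the paper's. The paper also argues by contradiction, but its pivot is the injectivity of the syndrome map on correctable error patterns: if the $e_n$ single-bit syndromes were linearly dependent, then dropping the dependent subset would produce an error pattern $\boldsymbol{e}_t^*$ of weight strictly less than $e_n$ with the same syndrome as $\boldsymbol{e}_t$; since both patterns are correctable and distinct, this contradicts the standard-array property that distinct correctable patterns have distinct syndromes. Your pivot is instead the minimum-distance bound: the dependent subset $\mathcal{J}$ yields a nonzero codeword $\boldsymbol{v}$ of weight at most $e_n$, which is incompatible with $d \geq 2e_n + 1$. The two arguments are essentially dual --- your codeword $\boldsymbol{v}$ is exactly the $GF(2)$ sum $\boldsymbol{e}_t + \boldsymbol{e}_t^*$ of the paper's two error patterns --- but your version buys precision and self-containedness: the paper's step asserting that ``there exists another error syndrome composed of only linear combination of independent vectors'' is stated rather than constructed, and your explicit subset-$\mathcal{J}$ construction is precisely what makes that step rigorous; moreover, your argument actually establishes the stronger fact that \emph{any} $d-1 \geq 2e_n$ columns of $\boldsymbol{H}$ are linearly independent, not just the $e_n$ columns in the support of a given pattern. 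What the paper's version buys in exchange is that it stays entirely within the syndrome-decoding vocabulary (correctable patterns, uniqueness of syndromes) that the surrounding section relies on, which is the very property the authors reuse immediately after the lemma to argue that distinct measurable channels yield distinct channel syndromes.
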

\begin{proof}
We are going to prove this lemma by contradiction.
First, assume that $\boldsymbol{s}_t$ is a linear combination of $e_n$ linearly \textbf{dependent} vectors $\boldsymbol{s}^{(i)}$ over $GF(2)$.
Therefore, there exists another error syndrome $\boldsymbol{s}_t^*$ composed of only linear combination of independent vectors $\boldsymbol{s}^{(i)}$ such that $\boldsymbol{s}_t = \boldsymbol{s}_t^*$.
Therefore, there exists another error patter $\boldsymbol{e}_t^*$ with number of errors strictly less than $e_n$ such that its syndrome $\boldsymbol{s}_t^* = \boldsymbol{s}_t$. Since $\boldsymbol{e}_t^*$ has a number of errors less than $e_n$, then it is a correctable error pattern, and since all error syndromes of correctable error patterns are different, then $\boldsymbol{s}_t^*$ should be $\neq \boldsymbol{s}_t$. Hence, we arrive at a contradiction.
\end{proof}
It is also easy to see that if $\boldsymbol{e}_{t_1}$ and $\boldsymbol{e}_{t_2}$ are two different correctable error patterns, then their error syndromes $\boldsymbol{s}_{t_1}$ and $\boldsymbol{s}_{t_2}$ are composed of a linear combination of different sets of single-bit error syndromes $\boldsymbol{s}^{(i)}$.

\begin{lemma}
Any $n-$dimensional linearly independent vectors over $GF(2)$, are also linearly independent over $\mathbb{C}^n$.
\end{lemma}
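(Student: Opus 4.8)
The plan is to prove the contrapositive: if the binary vectors $\boldsymbol{v}_1,\dots,\boldsymbol{v}_k \in GF(2)^n$ become linearly \emph{dependent} over $\mathbb{C}$ once their $0$/$1$ entries are reinterpreted as the complex numbers $0,1$, then they must already have been linearly dependent over $GF(2)$. So I would start from a nontrivial complex relation $\sum_{i} c_i \boldsymbol{v}_i {=} \boldsymbol{0}$ with $c_i \in \mathbb{C}$ not all zero, and work to manufacture a nontrivial $GF(2)$ relation out of it.

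The first move is to notice that, because each $\boldsymbol{v}_i$ has entries in $\{0,1\} \subseteq \mathbb{Q}$, a $\mathbb{C}$-linear dependence is exactly a nonzero vector $(c_1,\dots,c_k)$ in the kernel of the integer matrix $\boldsymbol{M} {=} [\boldsymbol{v}_1\,\cdots\,\boldsymbol{v}_k]$. Since the rank of a rational matrix is unchanged under field extension (it is governed by the largest nonvanishing minor, which is a rational number), a nonzero complex kernel forces $\mathrm{rank}(\boldsymbol{M}) {<} k$, and Gaussian elimination over $\mathbb{Q}$ then supplies a \emph{rational} nonzero solution. Clearing denominators, I may assume $c_i \in \mathbb{Z}$, not all zero, with $\sum_i c_i \boldsymbol{v}_i {=} \boldsymbol{0}$ holding in $\mathbb{Z}^n$. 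The crucial normalization is to divide the whole tuple by $g {=} \gcd(c_1,\dots,c_k)$ so that the integer coefficients become globally coprime; the identity still holds over $\mathbb{Z}$. Reducing every coefficient modulo $2$ yields $\sum_i \bar c_i\,\boldsymbol{v}_i {=} \boldsymbol{0}$ in $GF(2)^n$, where $\bar c_i {=} c_i \bmod 2$ and the vectors themselves are unchanged because their entries already lie in $\{0,1\}$.

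I expect the main obstacle to be precisely this coprimality step, because it is what guarantees the reduced relation is nontrivial rather than collapsing to $\boldsymbol{0}{=}\boldsymbol{0}$: if every $c_i$ were even the reduction would be vacuous, but $\gcd(c_1,\dots,c_k){=}1$ rules this out, so at least one $\bar c_i$ equals $1$. Hence $\sum_i \bar c_i \boldsymbol{v}_i {=} \boldsymbol{0}$ is a genuine $GF(2)$-linear dependence, contradicting the assumed independence over $GF(2)$ and establishing the contrapositive. The same argument applies verbatim with $\mathbb{R}$ in place of $\mathbb{C}$, which is all that is needed downstream since the channel syndromes $\boldsymbol{y_s}$ have real entries.
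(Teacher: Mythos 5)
Your proof is correct, but it takes a genuinely different route from the paper's. The paper argues in the forward direction: stacking the vectors as columns of a matrix $\boldsymbol{\Psi}$, independence over $GF(2)$ yields an $m{\times}m$ minor that is nonzero modulo $2$; reading the entries as integers, that same minor is an odd integer, hence nonzero over $\mathbb{R}$, so the submatrix is invertible and the columns remain independent over $\mathbb{R}$ (and then over $\mathbb{C}$). In that direction the only arithmetic input is trivial: an odd integer is not zero. You instead prove the contrapositive, which forces you to travel the hard way: descend a complex dependence to a rational one (rank invariance under field extension), clear denominators, normalize by the gcd, and reduce modulo $2$, with coprimality being exactly what keeps the reduced relation nontrivial---you correctly flagged that normalization as the crux. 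Both arguments are sound; note, though, that your rank-invariance step is the same largest-nonvanishing-minor fact the paper leans on, just invoked between $\mathbb{C}$ and $\mathbb{Q}$ instead of between $GF(2)$ and $\mathbb{Z}$, so your proof spends more machinery on the same core ingredient. What your route buys: the gcd-and-reduce argument generalizes verbatim to $GF(p)$ for any prime $p$, and it makes the descent from $\mathbb{C}$ to $\mathbb{Q}$ explicit, which the paper's closing clause (that the same argument extends from $\mathbb{R}$ to $\mathbb{C}$) treats only in passing. One small correction to your final remark: the $\mathbb{R}$ version does suffice downstream, but not because the syndromes $\boldsymbol{y_s}$ have real entries---they do not, since the path gains $\alpha_l^b$ are complex; it suffices because real vectors (the columns of $\boldsymbol{H}$) that are independent over $\mathbb{R}$ are automatically independent over $\mathbb{C}$, as one sees by splitting any complex relation into its real and imaginary parts.
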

\begin{proof}
Let $\boldsymbol{v_1}, \dots, \boldsymbol{v_m}$ be a set of $n-$dimensional vectors defined over $GF(2)$. The vectors $\boldsymbol{v_i}$ can be made the columns of an $n {\times} m$ matrix $\boldsymbol{\Psi}$. Since all $\boldsymbol{v_i}$'s are linearly independent over $GF(2)$, then $\boldsymbol{\Psi}$ is an invertible matrix. Therefore, there exists a non-zero (modulo $2$) $m {\times} m$ minor of $\boldsymbol{\Psi}$. Now, suppose the entries in $\boldsymbol{\Psi}$ are interpreted as real numbers. Therefore, $\boldsymbol{\Psi}$, now taken over $\mathbb{R}$, has an $m {\times} m$ sub-matrix whose determinant is non-zero, which proves that it is invertible. Therefore, the vectors $\boldsymbol{v_i}$'s, i.e., columns of $\boldsymbol{\Psi}$, are linearly independent over $\mathbb{R}$ which, using the same argument, can also be shown to be linearly independent over $\mathbb{C}$.
\end{proof}

Suppose that entries of $\boldsymbol{H}$ and $\boldsymbol{e^{(i)}}$ are interpreted as real numbers, then we can write the channel $\boldsymbol{q^a}$ as
\begin{equation}
({\boldsymbol{q}^a})^T = \alpha_1 \boldsymbol{e}^{(1)} + \alpha_2 \boldsymbol{e}^{(2)} + \dots + \alpha_n \boldsymbol{e}^{(n)}
\end{equation}
where $\alpha_i {\in} \mathbb{C}$
and $\sum_{i=1}^n \mathbb{\mathbbm{1}}(\alpha_i {\neq} 0) {\leq} e_n$.
Therefore, each channel syndrome ${(\boldsymbol{y}_s)}^T {=} {(\boldsymbol{q}^a)}^T \boldsymbol{H}^T \Longrightarrow \boldsymbol{y}_s {=} \boldsymbol{H} \boldsymbol{q}^a$ is a linear combination of independent vectors in $\mathbb{C}^{n{-}k}$ (columns of $\boldsymbol{H}$). Therefore, all possible measurable channels yield unique channel syndromes which implies that they are sufficient for the channel estimation problem.

\subsection{Mapping Function $\xi()$}
\label{Mapping}
Now that we have shown that each measurable channel can be mapped to a unique channel syndrome, we need to find this mapping function, i.e., $\xi {:} \boldsymbol{y_s} \rightarrow \widehat{\boldsymbol{q}}^{\boldsymbol{a}}$, where $\widehat{\boldsymbol{q}}^{\boldsymbol{a}}$ denotes the estimated channel.
Again, we resolve to a technique used in hard decision decoding where a look-up table is constructed that maps every error syndrome to a corresponding error pattern. Likewise, we construct a look-up table that indicates which channel corresponds to an obtained channel syndrome.

Since we employ ADCs with finite resolution,
only a finite number of realizable syndromes, $\boldsymbol{y_s}$, exist (and a finite number of corresponding channels).
Therefore, a look-up table method is feasible. We construct the table by, first, generating all possible sparse angular channels.
Then, we find the corresponding channel syndromes using $\boldsymbol{y_s} = \boldsymbol{H} \boldsymbol{q}^a$, where $\boldsymbol{q}^a \in \mathcal{Q}^a$ such that $\mathcal{Q}^a$ is the set of all measurable channels.
Let the actual, arbitrary, noise-corrupted, received channel syndrome be given by $\boldsymbol{y_s^r}$. Therefore, $\boldsymbol{y_s^r}$ might not match exactly one of the channel syndrome vectors in the look-up table. Hence, we instead search for the $\boldsymbol{y_s}$ table entry that has the closest \textit{distance} $\delta$ to $\boldsymbol{y_s^r}$, and pick its corresponding channel as the estimated channel $\widehat{\boldsymbol{q}}^{\boldsymbol{a}}$.
We define the distance between the two complex $m-$dimensional vectors $\boldsymbol{y_s},\boldsymbol{y_s^r}$, as follows:
\begin{equation}
\delta(\boldsymbol{y_s},\boldsymbol{y_s^r}) = \sqrt{\sum_{i=0}^{m-1} \lvert y_{s_i} - y_{s_i}^r \rvert ^2 }.
\end{equation}

By obtaining $\widehat{\boldsymbol{q}}^{\boldsymbol{a}}$, we not only identify the AoA at the Rx, but we also obtain the magnitude and phase information associated with every strong path to the TX.



\subsection{Multiple Transmit and Receive Antennas}
\label{MultiTXRX}
So far, we have considered channels with single transmit antennas and shown how to perform beam discovery at RX. To extend our approach to a general setting, we consider channels with $n_t$ antennas at TX, and $n_r$ antennas at RX.
Thus, instead of the TX just sending signals omnidirectionally, now it can perform highly directional transmission.
Recall that the RX is able to perform channel measurements using multi-armed beams. Similarly, the TX can send signals using multi-armed beams to simultaneously focus on multiple directions using precoding vectors $\boldsymbol{f_j}$.

The design of precoding vectors can also be obtained using an LBC approach. Similar to the method of designing rx-combining vectors $\boldsymbol{w_i}$, we look for an LBC, $C_2$, that has CWs of length $n_2 {=} n_t$ and can correct for $e_n {=} L$ errors. Let the parity check matrix of $C_2$ be $\boldsymbol{H}_2$, using which, we will design the precoding vectors $\boldsymbol{f_j}$.
Let \textit{beam}$_{\text{\textit{tx}}}$\textit{\#i} denote the $i^{th}$ TX beam which points to TX direction \textit{dir}$_{\text{\textit{tx}}}$ \textit{\#i}.
Then, just as before, we envisage $\boldsymbol{H}_2$ as an array whose columns are associated with resolvable TX directions such that: i) its $j^{th}$ column corresponds to \textit{dir}$_{\text{\textit{tx}}}$\textit{\#j}, and ii) its $i^{th}$ row corresponds to the $i^{th}$ measurement.
We note that no actual measurements are performed at TX; we use the word \textit{measurement} to refer to precoding, consistent with the case of RX.
That is, the $i^{th}$ TX measurement is actually the $i^{th}$ precoder $\boldsymbol{f}_i$.
Thereby, we design the $i^{th}$ precoder as a multi-armed TX beam such that, only if $h_{i,j}$, the intersection of the $i^{th}$ row and $j^{th}$ columns of $\boldsymbol{H}_2$, is $ = 1$, do we include sub-beam \textit{beam}$_{\text{\textit{tx}}}$\textit{\#j} in $\boldsymbol{f_i}$.
Each TX measurement provides a component in a TX channel syndrome vector $\boldsymbol{y_s^{TX}}$. The total number of TX measurements (i.e., precoding vectors), denoted by $m_2$, is equal to the number of parity check bits of the code $C_2$. That is, $m_2 {=} n_2{-}k_2$, where $k_2$ is the length of $C_2$'s information sequences.
To obtain AoDs of strong paths at TX, we define the function $\xi_2()$ as the mapping function between all possible TX channel syndromes and their corresponding angular channels denoted by ${\boldsymbol{q}^a}^{\boldsymbol{TX}}$.
Note that, for every \textit{dir$_{\text{\textit{rx}}}$\#i}, there exists a corresponding ${\boldsymbol{q}^a}^{\boldsymbol{TX(i)}}$ which represents the $i^{th}$ row of $\boldsymbol{Q^a}$.
Also, since the maximum number of clusters is $L$, then, the number of non-zero vectors ${\boldsymbol{q}^a}^{\boldsymbol{TX(i)}}$ is $\leq L$.

\IncMargin{1em}
\begin{algorithm}[t]
\SetKwData{Left}{left}\SetKwData{This}{this}\SetKwData{Up}{up}
\SetKwFunction{Union}{Union}\SetKwFunction{FindCompress}{FindCompress}
\SetKwInOut{Input}{input}\SetKwInOut{Output}{output}

 \Input{$ \{ \boldsymbol{w}_i\}_{ \forall i \in \{1,\dots,m_1\}} \:$,
 $\{ \boldsymbol{f}_j \} _{, \forall j \in \{1,\dots,m_2\}} \:$ ,
 $\xi_1()  :  \boldsymbol{y_{s}}  \rightarrow \widehat{\boldsymbol{q}}^a \:$,
 $\xi_2()  :  \boldsymbol{y_{s}^{TX}}  \rightarrow \hat{\boldsymbol{q}}^{a^{\boldsymbol{TX}}}$
 }
 \Output{ $\{\boldsymbol{y_{s_i}}\}_{\forall i \in \{1,\dots,m_2\}}$ }
\Begin{
 $j = 0$\;
 \While{$j < m_2 $}{
 	$i = 0$\;
	\While{$i < m_1 $}{
		$y_{s_{i,j}} = \boldsymbol{w}_i^H \boldsymbol{Q} \boldsymbol{f}_j s + \boldsymbol{w}_i^H \boldsymbol{n}$  \tcp*[r]{channel measurement}
		$i \leftarrow i + 1$
	}
	$\boldsymbol{y_{s_j}} \leftarrow \{y_{s_{i,j}}\}_{\forall i \in \{1,\dots,m_1\}}$ \tcp*[r]{construct channel syndrome $\boldsymbol{y_{s_j}}$}
		\tcc{find corresponding channel $\boldsymbol{q^a}^{(j)} = [{q_1^a}^{(j)}, {q_2^a}^{(j)}, \dots, {q_{n_r}^a}^{(j)}]^T$}
	$\boldsymbol{q^a}^{(j)} \leftarrow \xi_1(\boldsymbol{y_{s_j}})$\;

		\For{$p\leftarrow 1$ \KwTo $n_r$}{

			\tcc{construct TX channel syndromes $\boldsymbol{y_{s}^{TX(p)}}$, where $\boldsymbol{y_{s}^{TX(p)}} = [y_{s_1}^{TX(p)},y_{s_2}^{TX(p)}, \dots, y_{s_{m_2}}^{TX(p)}]^T$}		
			$y_{s_j}^{TX(p)} \leftarrow {q_p^a}^{(j)}$

			}

		$j \leftarrow j + 1$\;
 }
	\For{$p\leftarrow 1$ \KwTo $n_r$}{
 $\boldsymbol{{q^a}^{TX(p)}} \leftarrow \xi_2(\boldsymbol{y_s^{TX(p)}}) $
 }
 $\widehat{\boldsymbol{Q}}^{\boldsymbol{a}} =
 \begin{pmatrix}
  \boldsymbol{{q^a}^{TX(1)}} & \boldsymbol{{q^a}^{TX(2)}} & \hdots & \boldsymbol{{q^a}^{TX(n_r)}}
 \end{pmatrix}^T
 $
}
 \caption{\small Beam discovery of multiple TX/RX antennas.}
 \label{Alg:MultiTXRX}
\end{algorithm}
\DecMargin{1em}

To see the whole picture, assume that a code $C_1$, with CWs of length $n_1 {=} n_r$, is an LBC code associated with beam discovery at RX side. Let the number of RX measurements, i.e., the number of rx-combining vectors, be $m_1$ such that $m_1  {=} n_1 {-} k_1$, where $k_1$ is the length of information sequences of $C_1$. Also let $\xi_1()$ be the mapping function between RX channel syndromes and its corresponding angular channel.
Under this setting, the beam discovery problem is performed as follows: 
i) The TX starts starts sending its training sequence using the precoder $\boldsymbol{f_j}, \forall j \in \{0,\dots,m_2-1\}$.
ii) The RX performs $m_1$ channel measurements while $\boldsymbol{f_i}$ is being used at TX and obtains a channel syndrome $\boldsymbol{y_{s_j}}$.
iii) Based on $\boldsymbol{y_{s_j}}$, the RX obtains a corresponding channel, $\boldsymbol{{q^a}^{(j)}}$ with path components $\{{q_p^a}^{(j)}\}_{\forall p \in \{1,\dots,n_r\}}$.
Notice that $\boldsymbol{{q^a}^{(j)}}$'s do not necessarily represent individual path gains, but rather, combinations of paths accumulating at a single \textit{dir$_{\text{\textit{rx}}}$\textit{\#}}. Therefore, there exists a resemblance to channel syndromes which we exploit.
iv) We construct a set of $n_r$ TX channel syndromes, $\boldsymbol{y_s^{TX(p)}}$ where their $j^{th}$ component $y_{s_j}^{TX(p)} {=} {q_p^a}^{(j)}$, i.e.,
$[\boldsymbol{y_s^{TX(1)}}, \boldsymbol{y_s^{TX(2)}}, \dots, \boldsymbol{y_s^{TX(n_r)}}] =
[\boldsymbol{{q^a}^{(1)}}, \boldsymbol{{q^a}^{(2)}} , \dots, \boldsymbol{{q^a}^{(m_2)}}]^T$.
v) Finally, we find the angular TX channel for every \textit{dir}$_\text{\textit{rx}}$ \textit{\#p}, i.e., $p^{th}$ row of $\boldsymbol{Q^a}$, using the mapping function $\boldsymbol{{q^a}^{TX(p)}} {=} \xi_2(\boldsymbol{y_s^{TX(p)}})$.
Notice that, since no more than $L \ll n_r$ clusters exist, and since $\boldsymbol{0}$ channels correspond to $\boldsymbol{0}$ channel syndromes, we only need to apply $\xi_2()$ a maximum of $L$ times -unless measurement error occurs.
This whole process is highlighted in Algorithm \ref{Alg:MultiTXRX}.

\begin{remark}{}
The estimated channel $\widehat{\boldsymbol{Q}}^{\boldsymbol{a}}$ may contain more than $L$ non-zero components. The reason is that the receiver obtains a channel $\boldsymbol{{q^a}^{(j)}}$ for every precoder $\boldsymbol{f_j}$ which may contain erroneous component estimates. Incorrect estimates occur as a result of measurement errors which happen due to i) channel noise, ii) quantization error.
Now every $\boldsymbol{{q^a}^{(j)}}$ may contain a maximum of $L$ non-zero components, however, some of which may be due to measurement errors. Afterwards, potentially noise-corrupted $\{\boldsymbol{{q^a}^{(j)}}\}_{\forall j\in\{1,\dots,n_r\}}$ are used to obtain TX channel syndromes as shown in Algorithm \ref{Alg:MultiTXRX}. That is, for every \textit{dir}$_{\text{\textit{rx}}}$\textit{\#i} we obtain a TX channel syndrome to identify the corresponding \textit{dir}$_{\text{\textit{tx}}}$\textit{\#j}'s that have strong components. Thus, we may obtain a maximum of $L$ non-zero components per \textit{dir}$_{\text{\textit{rx}}}$\textit{\#i}.
\end{remark}

\section{Performance Evaluation}
\label{Eval}
\subsection{Performance Metrics}
\begin{figure*}[t]
\centering
\begin{subfigure}{.33\textwidth}
  \centering
  \captionsetup{justification=centering}
\includegraphics[width=0.95\linewidth]{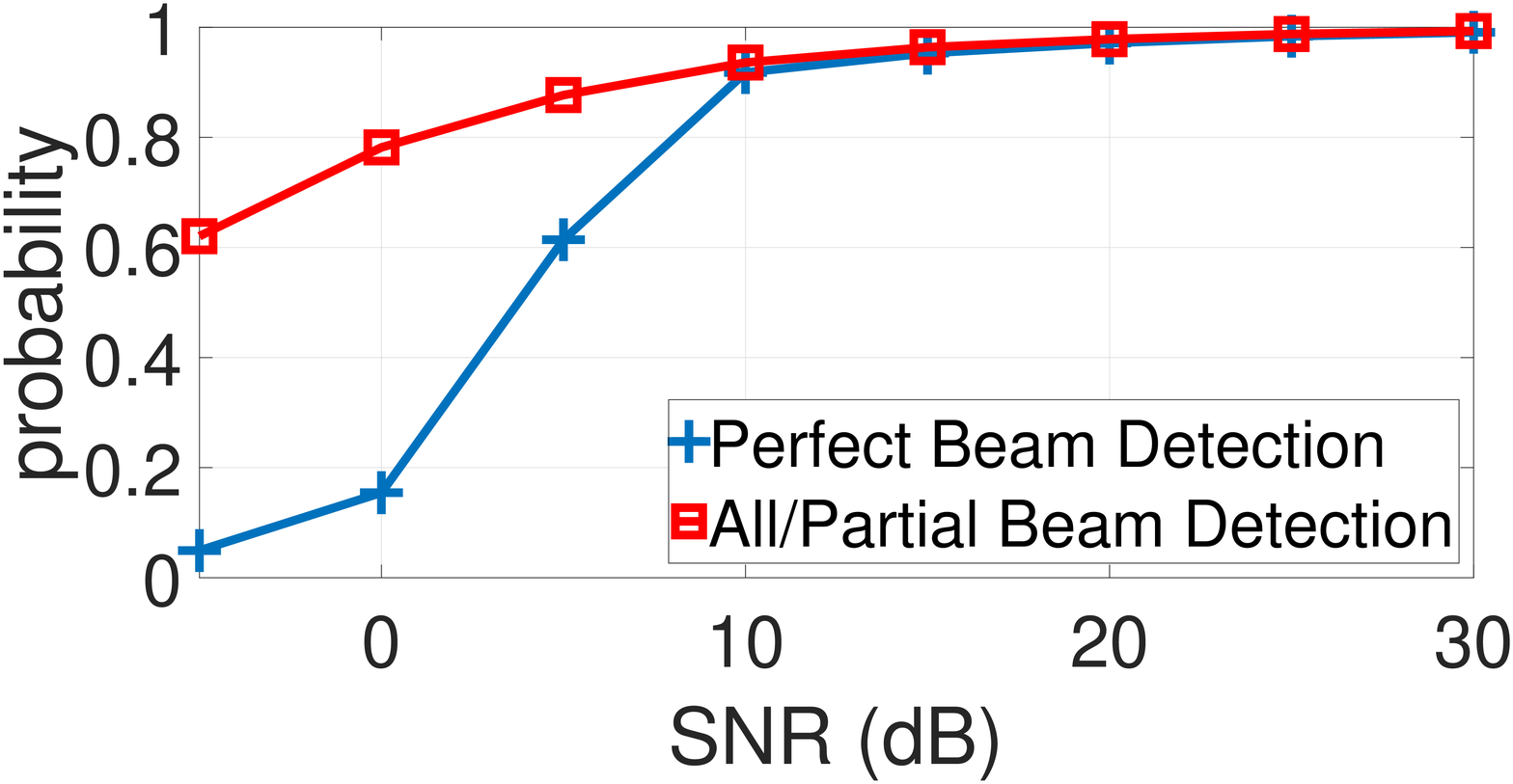}
\caption{\small Beam detection probability \\($15 {\times} 15$ channel with $L{=}1$).}
\label{fig:beamDetProb_15x15_9L_1P}
\end{subfigure}%
\begin{subfigure}{.33\textwidth}
  \centering
  \captionsetup{justification=centering}
\includegraphics[width=0.95\linewidth]{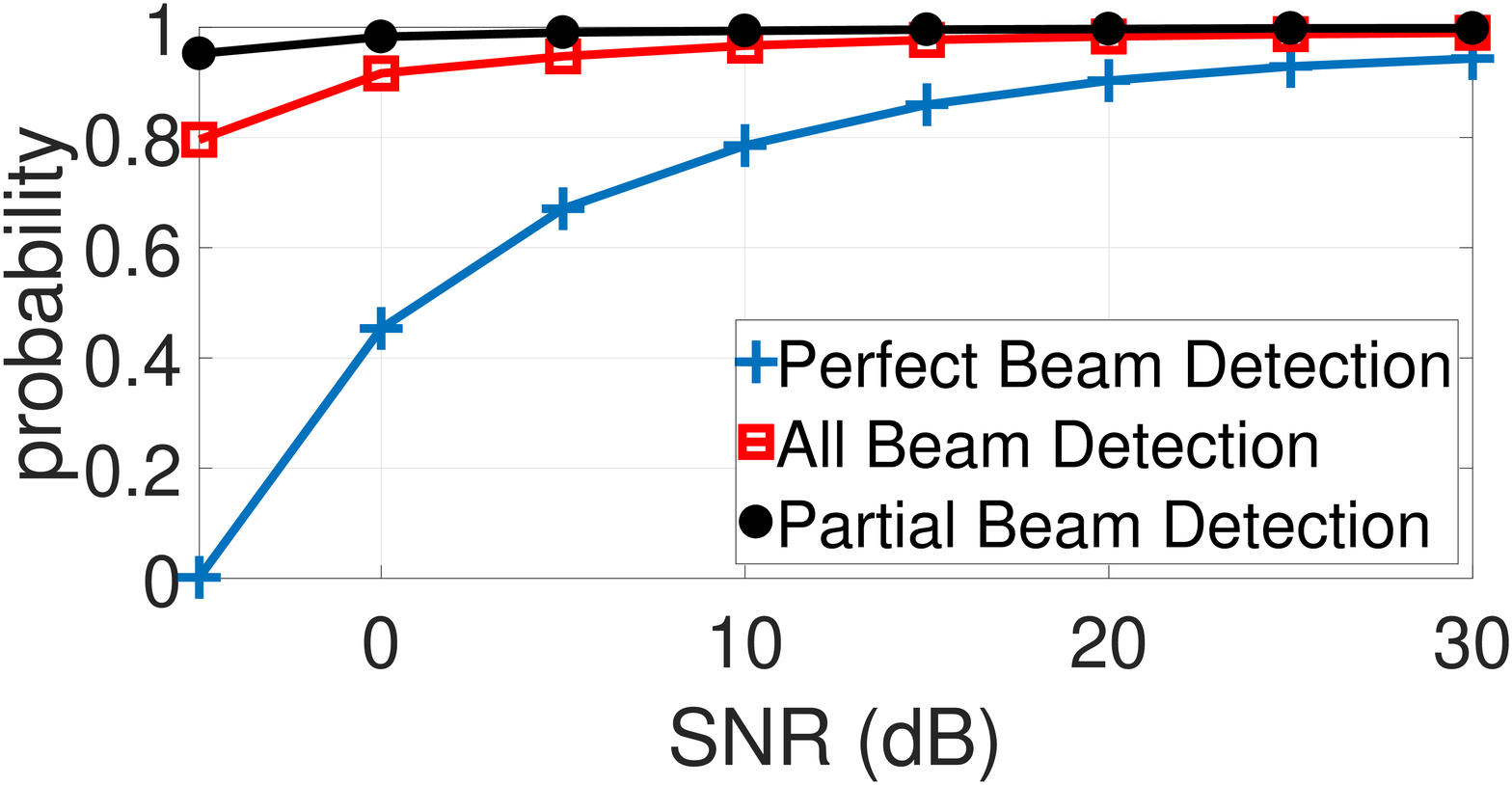}
\caption{\small Beam detection probability \\($8 {\times} 8$ channel with $L{=}2$).}
\label{fig:beamDetProb_8x8_9L_2P} 
\end{subfigure}%
\begin{subfigure}{.33\textwidth}
  \centering
\includegraphics[width=0.95\linewidth]{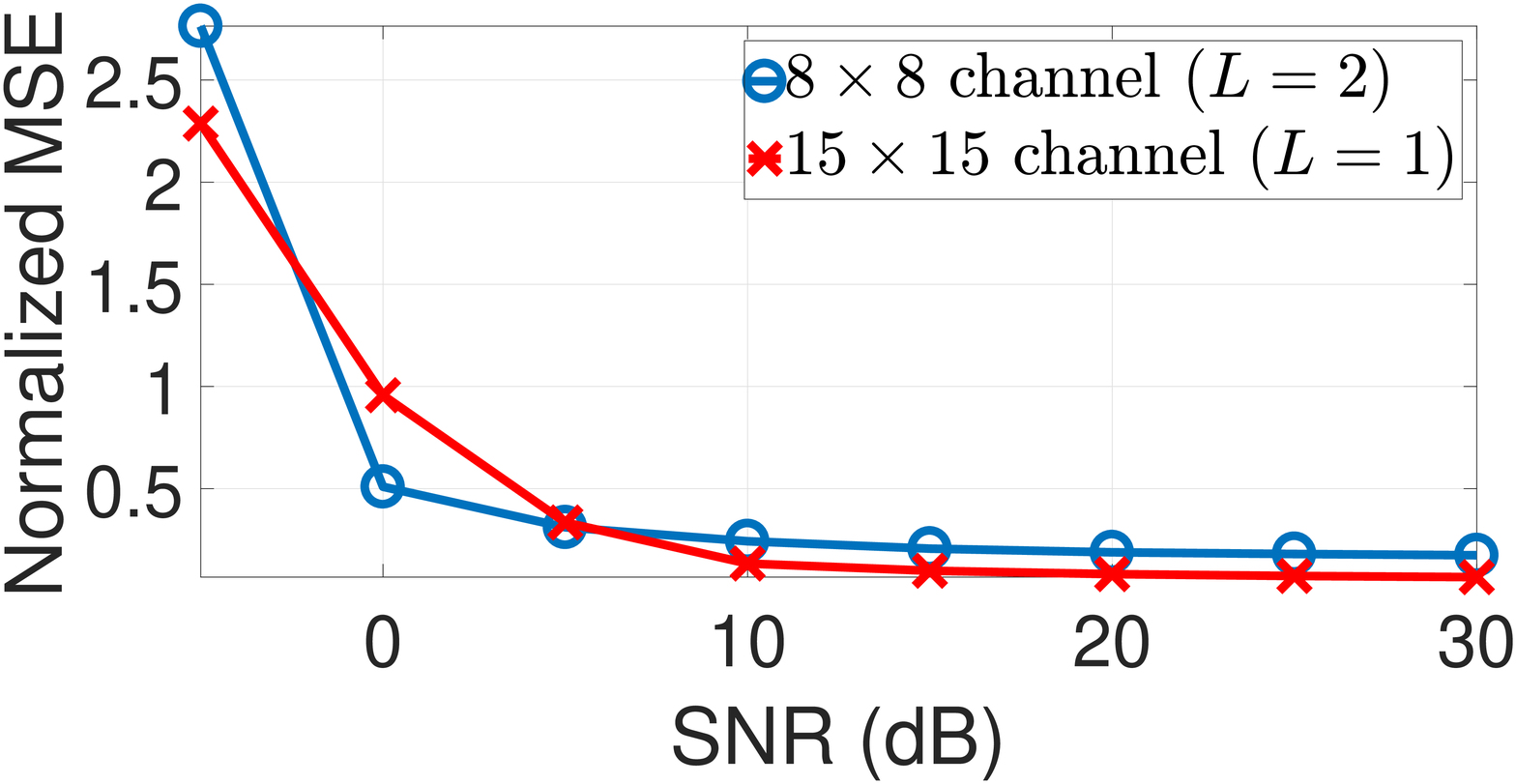}
\caption{Normalized Mean Squared Error (MSE).\\ \textcolor{white}{.}}
\label{fig:MSE}
\end{subfigure}%
\caption{Performance Evaluation.}
\end{figure*}
Our performance metrics focus on three basic criteria; accuracy of beam detection, number of measurements, and accuracy of path gain value estimates. To that end, we use the following performance parameters:\\
i)   \textbf{Number of measurements.} An important performance objective is to achieve a similar performance as the exhaustive search, yet with a much smaller number of measurements.\\
ii)  \textbf{Perfect beam detection probability:} probability that \textbf{only} correct AoAs/AoDs of all available detectable paths are identified (i.e, no incorrect paths are detected along with true ones).\\
iii) \textbf{All-beam detection probability:} probability of correctly identifying all true detectable paths (with possibility of other incorrect paths).\\
iv)  \textbf{Partial beam detection probability:} probability that at least one correct path is detected (with possibly other incorrect paths).\\
v)   \textbf{Number of incorrect beams:} the number of incorrect paths identified between TX and RX.\\
vi)  \textbf{Normalized mean squared error (MSE):} $\frac{\left\Vert \boldsymbol{Q^a} - \widehat{\boldsymbol{Q}}^{\boldsymbol{a}}\right\Vert^2_F}{\left\Vert \boldsymbol{Q^a}\right\Vert^2_F}$.\\
We assess these performance parameters against different \textit{SNR} levels which capture the effect of noise power on performance.

\subsection{Simulation Parameters}
Unless otherwise specified, we use the following simulation parameters. We consider a mm-wave channel with $L$ paths between TX and RX\footnote{Note that we always generate $L$ channel paths, however, according to the preset \textit{SNR}, some path gains are too weak to be detected at RX (recall that in Eq. \ref{eqn:SNRl}, \textit{SNR} is defined per path).
Thus, only paths with enough signal power to cross the threshold of the $1^{st}$ level (${\pm} 1$) of the utilized ADCs are considered as the ground truth to which we evaluate our performance parameters.}.
Path gains $\alpha_l^b$ are generated such that $\textit{SNR}_l$ lies in a range between $\textit{SNR}_{min}$ and ($\textit{SNR}_{min} {+} 20$) dB for paths that contain strong reflectors.
The noise power is $N_0 {=} {-}95$dBm. 	The signal power $P$ is adjusted such that for the weakest reflector, an $\textit{SNR}_{min}$ value is achieved. For example, at \textit{SNR} ${=} 10$dB, the required transmitted power for a $15 {\times} 15$ channel is $P {\approx} 43$dBm and for an $8 {\times} 8$ channel is ${\approx} 46$dBm.
Further, we use ADCs with $9$ quantization levels, i.e., $\{-4,\dots,4\}$, where the maximum output is adjusted to the maximum expected received signal power according to the preset \textit{SNR}. Note that the maximum received power depends on the available number of paths. Also negative quantization levels, e.g. $-4$, represent signals with the same power level as $4$ but with phase shift $= 180^\circ$ where the phase reference can be set arbitrarily at RX. For every simulation scenario, we average across $10^5$ runs.

\subsection{SinglePath Channels}
%
%
\begin{figure}[t]
\centering
\includegraphics[width=1\linewidth]{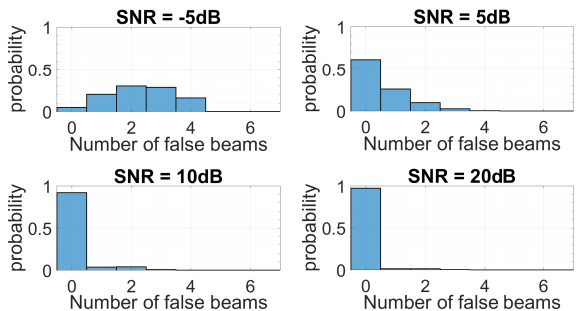}
\caption{Number of incorrect beams ($15 {\times} 15$ channel with $L{=}1$)}
\label{fig:Hist_2_15x15_9L_1P}
\end{figure}

Consider a $15 {\times} 15$ mm-wave channel with $L {=} 1$ path between TX and RX. Hence, the parity check matrix of $(15,11,3)$ Hamming code can be used for the design of both precoders, $\boldsymbol{f_j}$, and rx-combiners, $\boldsymbol{w_i}$, i.e., $\boldsymbol{H_1}$ and $\boldsymbol{H_2}$, respectively.
Hence, we need a number of TX measurements $m_1$, which is identical to the number of RX measurements $m_2 {=} 15{-}11 {=} 4$. Hence, the total number of measurements is $m {=} 16$. On the other hand, the exhaustive search method needs $255$ measurements to check every possible TX and RX beam combination. Thus, our approach results in ${\sim} 92.8 \%$ reduction in the number of measurements needed for beam discovery.
We investigate the probability of perfect, all and partial beam detection at different levels of \textit{SNR}.
Fig. \ref{fig:beamDetProb_15x15_9L_1P} depicts these probabilities. Note that since there only exists $1$ path, both all and partial beam discoveries are the same.

Further, we investigate the performance in terms of the probability of obtaining a specific number of incorrect paths/beams. This is depicted as a histogram plot in Fig. \ref{fig:Hist_2_15x15_9L_1P}.
We notice that the number of incorrect beams decreases with increasing \textit{SNR}.
For instance, at SNR ${=}-5$dB, the probability of obtaining $0$ incorrect paths is ${\sim} 0.05$. This probability starts to increase
until it peeks at ${\sim} 0.3$ for $2$ incorrect paths. Afterwards it starts decreasing again and reaches $\sim 0.16$ at $4$ incorrect paths.
At higher levels of \textit{SNR}, the probability of obtaining less number of incorrect paths increases dramatically, e.g., at \textit{SNR} ${=}20$dB, there is a probability of $0.972$ of obtaining no incorrect paths where the remaining $0.028$ is divided between $1,2$ and $3$ incorrect paths in a decreasing fashion.

Finally, Fig. \ref{fig:MSE} depicts the normalized mean squared error of the channel estimate $\widehat{\boldsymbol{Q}}^{\boldsymbol{a}}$. The very high MSE value at ${-}5$dB indicates very poor channel estimates which indicates that $\widehat{\boldsymbol{Q}}^{\boldsymbol{a}}$ has large components at truly zero components in $\boldsymbol{Q^a}$. Nevertheless, MSE drops fast at higher values of \textit{SNR}. We can see that starting at \textit{SNR} $ {=} 10$dB, the performance becomes highly reliable.

\subsection{MultiPath  Channels}
%
\begin{figure}[t]
\centering
\includegraphics[width=1\linewidth]{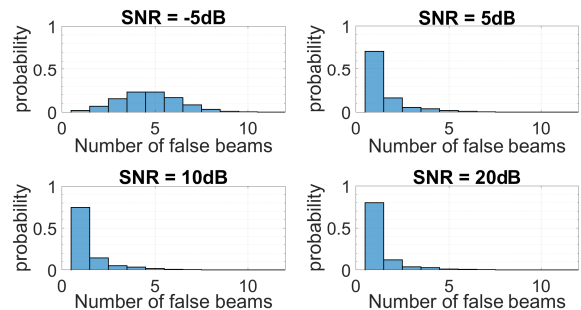}
\caption{Number of incorrect beams ($8 {\times} 8$ channel with $L{=}2$)}
\label{fig:Hist_2_8x8_9L_2P} 
\end{figure}
Consider an $8 {\times} 8$ channel with $L {=} 2$ paths. Thus, an $(8,2,5)$ code is a proper choice for both $\boldsymbol{H_1}$ and $\boldsymbol{H_2}$. A total number, $36$, of measurements is needed for beam discovery. As opposed to $64$ measurements needed for exhaustive search, we achieve ${\approx} 43.7 \%$ reduction in the number of measurements under this scenario.
Fig. \ref{fig:beamDetProb_8x8_9L_2P} shows the probability of perfect, all, and partial beam detection.
We notice a large gap between these probability measures at low \textit{SNR}. However, this gap becomes quickly narrower as \textit{SNR} increases.
For the number of incorrect paths, we obtain similar performance trends to the single-path channel case. However, the maximum number of incorrect paths experienced is slightly larger.
Histograms that depicts the probabilities of the number of incorrect beams are shown in Fig. \ref{fig:Hist_2_8x8_9L_2P}.
Again, we observe that the reliability of beam discovery increases significantly with increasing \textit{SNR} which is depicted in terms of MSE in Fig. \ref{fig:MSE}.

\section{Conclusion}
\label{DiscConc}
This work provides a solution for the mm-wave channel estimation problem by exploiting its sparse nature in the angular domain. The proposed solution is a beam discovery technique that is similar to error discovery in channel coding. We show that our proposed technique can significantly reduce the number of measurements required for reliable channel estimation.
Our solution takes into account the size of the channel and its sparsity level when determining the number of measurements.
As future work, we will investigate more efficient ways of executing the mapping function $\xi()$, possible improvements for designing precoders and rx-combiners, and mitigating the effect of noise-corrupted channel measurements.



\end{document}